\newcommand\norm[1]{\left\lVert#1\right\rVert}
\newtheorem{theorem}{Theorem}
\newtheorem{lemma}{Lemma}
\author{\IEEEauthorblockN{Osama A. Hanna\IEEEauthorrefmark{2},
        Amr El-Keyi\IEEEauthorrefmark{3} and Mohammed Nafie\IEEEauthorrefmark{1}
        }
    \IEEEauthorblockA{\IEEEauthorrefmark{2}Wireless Intelligent Networks Center (WINC), Nile University, Cairo, Egypt\\
        \IEEEauthorrefmark{3}Department of Systems and Computer Engineering, Carleton University, Ottawa, ON, Canada\\
        \IEEEauthorrefmark{1}EECE Dept., Faculty of Engineering, Cairo University, Giza, Egypt\\
        Email: \{o.ashraf@nu.edu.eg, amr.elkeyi@sce.carleton.ca, mnafie@nileuniversity.edu.eg\}
    }
}
\begin{document}
\title{Degrees of Freedom in Cached MIMO Relay Networks With Multiple Base Stations}
\maketitle
\begin{abstract}
The ability of physical layer relay caching to increase the degrees of freedom (DoF) of a single cell was recently illustrated. In this paper, we extend this result to the case of multiple cells in which a caching relay is shared among multiple non-cooperative base stations (BSs). In particular, we show that a large DoF gain can be achieved by exploiting the benefits of having a shared relay that cooperates with the BSs. We first propose a cache-assisted relaying protocol that improves the cooperation opportunity between the BSs and the relay. Next, we consider the cache content placement problem that aims to design the cache content at the relay such that the DoF gain is maximized. We propose an optimal algorithm
and a near-optimal low-complexity algorithm for the cache content placement problem. Simulation results show significant improvement in the DoF gain using the proposed relay-caching protocol.
\end{abstract}
\begin{IEEEkeywords}
Physical layer caching, relay networks, cooperative MIMO, DoF.
\end{IEEEkeywords}
\section{Introduction}
 Relays are widely used  to extend the coverage of wireless networks without expensive payload backhaul. However, it was shown that relays can not provide degrees of freedom (DoF) gain in conventional wireless networks \cite{cadambe2009degrees}. Recently, it has been discovered that if relays are allowed to deliver cached information beside instantaneous information, the DoF of the network increases \cite{han2015degrees}. The benefits of caching stem from the fact that a currently-delivered data file may be requested again in the future.

 Caching schemes were first proposed for improving the performance of operating systems \cite{belady1966study}. After that, various caching applications have been proposed. Many web caching schemes have been presented such as \cite{borst2010distributed} with the aim of sustaining the data flow of the internet. In \cite{hefeeda2008traffic}, caching was used to reduce the number of hops from the source to the destination, and hence, the delay was reduced. In \cite{barish2000world}, the aim of caching was to reduce the peak traffic by duplicating some contents in distributed memories across the network. This duplication occurs over the off-peak hours. The duplicated contents can be used during the peak hours to reduce network congestion. Small cell  caching was used to improve the area spectral efficiency of video transmission in cellular systems \cite{molisch2011wireless}. An information theoretic formulation of the caching problem was presented in \cite{maddah2014fundamental}. However, all these works separated caching from the physical layer (PHY).

 In \cite{naderializadeh2016fundamental}, it was shown that the one-shot sum-DoF grows linearly with the aggregate cache size in the network. Also, it was recently shown in \cite{han2015degrees} that PHY caching can provide DoF gain in a single cell with a caching relay, as the topology of the relay channel is transformed into a broadcast channel when the requested content exists in the base station (BS) and the relay station (RS) simultaneously. However, the results in \cite{han2015degrees} can not be directly extended to the case of multiple cells.  For example, consider a simple network with $K$ single antenna users, two non-cooperative $M$-antenna BSs and two non-cooperative $M$-antenna RSs each with cache of size $B_C$ (one for each BS). Assuming that the BSs do not have any contents in common, the two BSs can not use the channel simultaneously, and hence,  we have to assign the channel for only one of them at each time instant. Thus, we have no DoF gain over \cite{han2015degrees} in this case. However, in this paper we show that using a single shared RS with $2M$ antennas and cache of size $2B_c$ instead of two separate RSs can greatly improve the DoF beyond those achieved by direct extension of the scheme in \cite{han2015degrees}.

\begin{figure}[t!]
  \centering
    \includegraphics[width=0.45\textwidth]{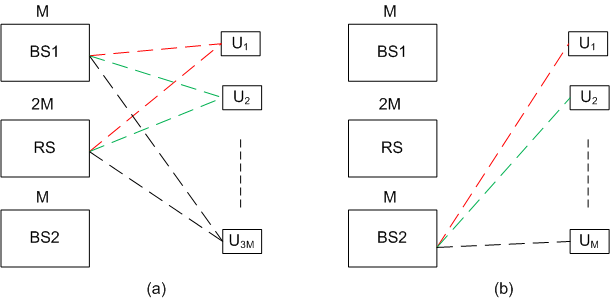}
	  \caption{Cache-induced opportunistic topology transformation. The dashed lines in (a) represents the transmission of cached parity bits where DoF$=3M$ with the help of the shared RS, while the dashed lines in (b) represents the transmission of parity bits not in the RS cache where DoF$=M$.}
\label{fig:1}
\end{figure}

In this paper, we propose a novel scheme to exploit \textit{cache-induced opportunistic MIMO cooperation} \cite{liu2014cache} in relay networks with multiple non-cooperative BSs. In particular, we exploit the benefit of having a shared relay station (RS) which significantly increases the DoF gain over the results of \cite{han2015degrees}. This DoF improvement can be attributed to the ability of a BS to use the shared RS when the other BS cannot benefit from the existence of the RS as illustrated in Fig.~\ref{fig:1}. We propose a cache-assisted relaying protocol using maximum distance separable (MDS) coding which improves the opportunity of MIMO cooperation between the BSs and the RS. Next, we propose a global-optimal cache content placement solution and a low-complexity self-learning algorithm to find a near-optimal cache content placement solution that does not require prior knowledge of the popularity of the content files.

The mathematical notation used in this paper is as follows. The superscripts $(.)^T$ and $(.)^\dagger$ denote the transpose and the Hermitian transpose respectively. The indicator function $1(\mathcal{A})$ returns $1$ when the event $\mathcal{A}$ is true, otherwise it returns $0$.

\section{Cache-assisted Relaying Protocol} \label{sys}
\subsection{System Model}
Consider a cached relay network composed of two non-cooperative \mbox{$M$}-antenna BSs (designated as BS1 and BS2) and a \mbox{$2M$}-antenna shared relay with cache of size \mbox{$2B_c$}. Each BS is connected to a gateway. There are \mbox{$K$} single antenna users requesting files from any of the content gateways via the relay network where \mbox{$K\geq3M$}. We denote the total number of available content files by $L$. We assume that the RS stores a fraction of each content file. Each user has the freedom to request any of the available $L$ files and the BS which connected to the gateway containing the requested file serves the user subsequently possibly with the assistance of the relay.

We assume that each user has a long sequence of requested files. Time is slotted with time slots indexed by $t$ where the duration of each time slot is \mbox{$\tau$}. The index of the file requested by user \mbox{$k$} is denoted by \mbox{$\pi_k$} whose size is given by \mbox{$F_{\pi_k}$}. We define the user request profile (URP) as the $K\times 1$ vector whose $k$th entry contains the index of the file requested by user $k$, i.e., \mbox{$\boldsymbol\pi = [\pi_1,\pi_2,...,\pi_K]^T$} . We assume that \mbox{$\boldsymbol\pi$} has a general distribution which changes over a slow timescale. We also assume that the packets of the content files are delivered to each BS via a high speed backhaul connection and that the RS has no fixed-line backhaul connection to any of the BSs.

Let $\mathcal{N}=\{B_1, B_2, R , D_1, \ldots, D_K \}$ denote the set containing the network nodes where $B_i$, $R$ and $D_k$ denote respectively the $i$th BS, RS, and the $k$th user.  We use $H_{X,Y}$ to denote the matrix containing the coefficients of the channel from transmitter node $Y$ to receiver node $X$ where $X,Y \in \mathcal{N}$. The channel matrices have i.i.d. complex Gaussian entries with zero-mean and unit-variance and are independent of each other. We consider a block fading channel where all the channel matrices remain constant within a time slot but are i.i.d. across the time slots. The transmission power of each BS is bounded by $P_S$ while the  transmission power of the RS is bounded by $P_R$ where  $P_R \leq P_{S}$.
\subsection{Proposed Relay Caching Protocol} \label{protocol}
In this Subsection, we present the proposed cache-assisted relaying protocol. Each BS divides each of its available content files into segments of size \mbox{$L_S$} bits each. These segments of the $l$th file are encoded into \mbox{$L_S + q_l L_S$} parity bits using an MDS rateless code, where \mbox{$q_l\in [0,1]$} is the \textit{cache content placement variable}. The \mbox{$\{q_l L_S\}_{l=1}^L$} parity bits will be cached at the RS. An MDS rateless code generates a codeword of arbitrary length from \mbox{$L_S$} information bits, such that if the decoder receives any \mbox{$L_S$} parity bits, it can recover the original \mbox{$L_S$} information bits. The MDS rateless code can be implemented using Raptor codes \cite{shokrollahi2006raptor} with a small redundancy overhead. As discussed in \cite{han2015degrees}, MDS encoding highly improves the probability of MIMO cooperative transmission. After that, the cache-assisted relaying protocol follows the following phases
\subsubsection{Cache Initialization Phase}
The \mbox{$q_lL_S$} parity bits for every segment of the \mbox{$l$}th content file for $l=1, \ldots ,L$ are transmitted offline to the RS and stored in the RS cache.  We define the \textit{cache content placement vector} as the $L\times 1$ vector whose $l$th component is the cache content placement variable of the $l$th file, i.e., {$\mathbf{q}=[q_1,...,q_L]^T$}. The cache content does not change for each instantaneous realization of \mbox{$\boldsymbol\pi$}. It should be designed according to the distribution of \mbox{$\boldsymbol\pi$} rather than its instantaneous realization. Thus, the change of the RS cache contents occurs over a slow timescale with small average network loading.
\subsubsection{Transmission Phase} \label{step3}
It was previously shown in \cite{han2015degrees} that PHY caching can opportunistically transform the unfavorable relay channel into a more favorable broadcast channel for cached relay networks with a single cell. We will extend this result to cached relay networks with multiple BSs. Moreover, we will investigate the benefit of having a shared RS rather than two separate RSs. This benefit is because each BS can use the shared RS at the time slots where the other BS does not need it.

Let \mbox{$\mathbb{K}_{i}$} denote the set of users that requested files stored in the content gateway of the $i$th BS. Let also \mbox{$q^{(i)}_k=q_{\pi_{O(k,\mathbb{K}_i)}}$} denote the cache content placement variable of the file requested by the $k$th ordered user that belongs to \mbox{$\mathbb{K}_{i}$},  where $i=1, 2$ and \mbox{$O(k,\mathbb{S})$} denotes the $k$th minimum element in the set \mbox{$\mathbb{S}$}. We define the cache content placement vector for the files requested by the users that belong to the set \mbox{$\mathbb{K}_{i}$} as \mbox{$\mathbf{q}_i=[q^{(i)}_1,...,q^{(i)}_{|\mathbb{K}_i|}]^T$}.

\textit{Example 1}: Let us assume that the number of content files is equal to $L=20$ files where files with indices $l=1, \ldots, 10$ are stored in BS1 while files with indices $l=11, \ldots, 20$ are stored in BS2. Let us assume that there are $K=12$ users and that the URP is given by $\boldsymbol\pi=[2,17,19,7,9,3,5,2,20,1,7,6]^T$, i.e., the first user requests the second file while the twelfth user requests file number 6. Hence, $\mathbb{K}_1=\{1,4,5,6,7,8,10,11,12\}$, $\mathbb{K}_2=\{2,3,9\}$, $\mathbf{q}_1=[q_1,q_4,q_5,q_6,q_7,q_8,q_{10},q_{11},q_{12}]^T$ and $\mathbf{q}_2=[q_2,q_3,q_9]^T$.

We also define \mbox{$S^{(k)}\in\{0,1\}$} as the \textit{parity usage state} of user \mbox{$k$}, where \mbox{$S^{(k)}=1$} means that user $k$ has not received all the cached parity bits for the currently requested segment and \mbox{$S^{(k)}=0$} means that user $k$ has received all the cached parity bits for the currently requested segment. We denote the set of users that can be cooperatively served by the $i$th BS and the RS by \mbox{$\mathcal{U}_{C_i}=\{k:S^{(k)}=1,\ k\in \mathbb{K}_i\}$}. Let us define the \textit{cache state} of the $i$th BS by
\begin{equation} \label{eqs1}
S_i=\begin{cases}
1 \text{\ \ if}&  |\mathcal{U}_{C_i}|\geq 3M \\
0 &  \text{otherwise}
\end{cases}
\end{equation}
where \mbox{$i\in \{1,2\}$}. We also define the \textit{total cache state S} as
\begin{equation}
S=\begin{cases} \label{eqs3}
1 \text{\ \ if}&  S_1=1\ \text{or}\ S_2=1 \\
0 &  \text{otherwise}.
\end{cases}
\end{equation}
When \mbox{$S=1$}, the $i$th BS whose $S_i=1$ can cooperate with the RS to transmit \mbox{$3M$} data streams to \mbox{$3M$} different users that belong to the set  \mbox{$\mathcal{U}_{C_i}$} achieving a large DoF gain as shown in Fig.~\ref{fig:1}(a). Hence, there are two transmission modes based on the total cache state $S$\\
 
\textbf{1)BS-only transmission mode (\mbox{$S=0$}):} Since the two BSs can not use the channel simultaneously, the channel is assigned for one BS at each time slot. Since we assume that each user has a long sequence of requested files, i.e., $|\mathbb{K}_1|+|\mathbb{K}_2|=K \geq 3M$, we have two cases:\\
  1) \mbox{$|\mathbb{K}_i|<M$} and \mbox{$|\mathbb{K}_{3-i}|\geq M$}, \mbox{$i\in\{1,2\}$}: In this case we assign the channel to the BS which has \mbox{$|\mathbb{K}_b|\geq M$}, \mbox{$b\in\{1,2\}$} while the other BS will transmit nothing, where \mbox{$b$} denotes the index of the chosen BS to use the channel.\\
  2) Both \mbox{$|\mathbb{K}_{1}|\geq M\ \text{and}\ |\mathbb{K}_{2}|\geq M$}: In this case we choose a BS at random with probability \mbox{$\frac{1}{2}$} to use the channel.\\
  
  The chosen BS randomly selects a subset of \mbox{$M$} users (denoted by \mbox{$\mathcal{U}_b^N$}) from the set \mbox{$\mathbb{K}_b$} and transmits some parity bits that are not in the RS cache to the users in \mbox{$\mathcal{U}_b^N$} using zero-forcing (ZF) beamforming as illustrated in Fig.~\ref{fig:1}(b). In this case, the received signal at user \mbox{$k\in \mathcal{U}_b^N$} is
\begin{equation}
y_{D_k}=H_{D_k,B_b} v_{b_k}s_k+n_{D_k}
\end{equation}
 where \mbox{$n_{D_k} \sim \mathcal{CN}(0,1)$} is the normalized additive white Gaussian noise (AWGN), \mbox{$s_k$} is the data symbol intended for user \mbox{$k$} and \mbox{$v_{b_k}\in \mathbb{C}^M$} is the ZF beamforming vector satisfies
\begin{equation}
H_{D_{k^{'}},B_b} v_{b_k}=0,\ \forall k^{'}\in \mathcal{U}_b^N-\{k\}.
\end{equation}
The data rate of user \mbox{$k$} in this case (\mbox{$S=0$}) is given by
\begin{equation}
r_k^N(v_{b_k})=\min\left(1(k\in \mathcal{U}_b^N)\log_2(1+|H_{D_k,B_b} v_{b_k}|^2),\frac{R_k}{\tau}\right)
\end{equation}
where $R_k$ is the number of non-cached bits that are not delivered to user $k$ for the currently requested segment.
\par   \textbf{2)BS-RS MIMO cooperative transmission mode (\mbox{$S=1$}):} We also have two cases:\\
  1) \mbox{$|\mathcal{U}_{C_i}|<3M$} and \mbox{$|\mathcal{U}_{C_{3-i}}|\geq 3M$}, \mbox{$i\in \{1,2\}$}: In this case we choose the BS which has \mbox{$|\mathcal{U}_{C_b}|\geq 3M$} to use the channel.\\
  2) Both \mbox{$|\mathcal{U}_{C_{1}}|\geq 3M\ \text{and}\ |\mathcal{U}_{C_{2}}|\geq 3M$}: In this case we choose a BS at random with probability \mbox{$\frac{1}{2}$} to use the channel. 
  
  The chosen BS randomly selects a subset of \mbox{$3M$} users (denoted by \mbox{$\mathcal{U}_b^{C}$}) from the set \mbox{$\mathcal{U}_{C_b}$}. The BS and the RS cooperatively transmit some cached parity bits to the users in \mbox{$\mathcal{U}_b^{C}$} using ZF beamforming as illustrated in Fig.~\ref{fig:1}(a). Let $h_k$ denote the $3M\times 1$ augmented channel vector containing the coefficeints of the channels from the BS and RS to the $k$th user, i.e.,   that \mbox{$h_k=\left[H_{D_k,B_b}^T, H_{D_k,R}^T\right]^T$}. In this case, the received signal at user \mbox{$k\in \mathcal{U}_b^{C}$} can be written as
\begin{equation}
y_{D_k}^{'}=h_k v_{b_k}^{'}s_k^{'}+n_{D_k}
\end{equation}
 where \mbox{$s_k^{'}$} is the cached data symbol intended for user \mbox{$k$} and \mbox{$v_{b_k}^{'}\in \mathbb{C}^{3M}$} is the ZF beamforming vector satisfies
\begin{equation}
h_{k^{'}} v_{b_k}^{'}=0,\ \forall k^{'}\in \mathcal{U}_b^{C}-\{k\}.
\end{equation}
 The data rate of user \mbox{$k$} in this case (\mbox{$S=1$}) is given by
\begin{equation}
r_k^{C}(v_{b_k}^{'})=\min\left(1(k\in \mathcal{U}_b^{C})\log_2(1+|h_k v_{b_k}^{'}|^2),\frac{R^{'}_k}{\tau}\right)
\end{equation}
where $R^{'}_k$ is the number of cached bits that are not delivered to user $k$ for the currently requested segment.
\par \textit{Example 2}: Assume the same setup of example 1 and suppose that $\mathcal{U}_{C_1}=\mathbb{K}_1$ and $\mathcal{U}_{C_2}=\mathbb{K}_2$. Then, $S_1=1$, $S_2=0$, because $|\mathcal{U}_{C_1}|>3M$, $|\mathcal{U}_{C_2}|<3M$, and hence, $S=1$. The transmission will occur according to case 1 in the BS-RS MIMO cooperative transmission mode.

At the beginning of each time slot, each BS determines \mbox{$\mathbb{K}_{i}$} and \mbox{$\mathcal{U}_{C_i}$}, $i=1,2$ based on the cache content placement vector \mbox{$\mathbf{q}$} and the user requests. After that, each BS determines \mbox{$S_i$} based on (\ref{eqs1}) then sends it to the RS. The RS determines the total cache state \mbox{$S$} based on (\ref{eqs3}) and selects a BS to use the channel based on the discussion we made earliers. Then, the RS sends the local CSI to the selected BS which in turn forms the global CSI \mbox{$H_{D_k,B_b}$} and \mbox{$H_{D_k,R}$}. The selected BS computes all the control variables (\mbox{$\mathcal{U}_b^C / \mathcal{U}_b^{N}\ \text{and }\{v_{b_k}\} / \{v_{b_k}^{'}\}$}) and sends them to the RS. Then, the transmission begins based on the total cache state \mbox{$S$}.\\

\subsubsection{Segment Decoding Phase}
Each user stores the received parity bits in the reassembling buffer until its size reaches \mbox{$L_S$}. The received segment is then MDS decoded and the buffer is cleared to store the next segment.

\section{DoF-Optimal cache content placement} \label{opt}
\subsection{DoF in Cached Relay Network}
In traditional relay networks, the DoF is defined as the ratio of the sum rate and \mbox{$\log$} SNR as SNR tends to \mbox{$\infty$}. In \cite{han2015improving}, the DoF definition of a cached relay network was introduced as
\begin{equation} \label{DoFdef}
\text{DoF}(\mathbf{q})=\lim_{P_{S}+P_R\to \infty}\mathbb{E}\left\lbrace \frac{R_\Sigma (P_{S}, P_R, B_C,\{F_l\})}{\log_2 \left( P_{S}+P_R\right) } \right\rbrace
\end{equation}
where the expectation is taken over the URP \mbox{$\boldsymbol\pi$}, cache state \mbox{$S$} and the channel matrices. \mbox{$R_\Sigma (P_{S}, P_R, B_C,\{F_l\})$} is the sum rate received by the \mbox{$K$} users. To have a non-zero DoF, we have to scale \mbox{$B_C \text{ and }F_l$} with the SNR. Thus, we assume
\begin{equation}
B_C= \tilde{B}_C \log_2 (P_S+P_R)
\end{equation}
\begin{equation}
F_l= \tilde{F}_l \log_2 (P_S+P_R)\ \forall l\in\{1,...,L\}.
\end{equation}
We also assume that $L_s= \tilde{L}_s \log_2 (P_S+P_R)$. The following theorem characterizes the DoF of our cached relay network.
\begin{theorem}
For given \mbox{$\mathbf{q}$}, \mbox{$2\tilde{B}_C\geq \sum_{l=1}^Lq_l\tilde{F}_l$} and as $\tilde{L}_S\to \infty$, the DoF of the cached relay network with two BSs is given by
\begin{equation} \label{DoF}
\text{DoF}(\mathbf{q})=M\left( 1+2\mathbb{E}_{\boldsymbol\pi}\left\lbrace P\{S=1\}\right\rbrace \right)
\end{equation}
where $P\{S=1\}$ is the fraction of time at which BS-RS MIMO cooperative transmission is used for a given URP $\boldsymbol\pi$.
\end{theorem}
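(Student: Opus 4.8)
The plan is to evaluate the DoF in \eqref{DoFdef} by conditioning on the total cache state $S$, computing the high-SNR sum rate separately in the two transmission modes, and then averaging over the fraction of time spent in each mode. The key observation is that each mode amounts to serving a fixed number of single-antenna users by zero-forcing (ZF) beamforming, so the high-SNR sum rate is dictated purely by the available spatial multiplexing gain. First I would fix the URP $\boldsymbol\pi$ and a channel realization and treat the two modes in turn.

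In the BS-only mode ($S=0$), the selected BS uses its $M$ antennas to send $M$ streams to the $M$ users in $\mathcal{U}_b^N$ by ZF. Since every channel matrix has i.i.d.\ complex Gaussian entries, the stacked channel matrices are full rank almost surely; hence for each $k$ the constraint $H_{D_{k'},B_b}v_{b_k}=0$ leaves a one-dimensional feasible subspace and the effective gain $|H_{D_k,B_b}v_{b_k}|^2$ is strictly positive almost surely. Choosing the ZF beamformers to use the full power budget, so that $\|v_{b_k}\|^2$ scales with $P_S$, makes $|H_{D_k,B_b}v_{b_k}|^2$ grow linearly in $P_S$; thus $\log_2(1+|H_{D_k,B_b}v_{b_k}|^2)\sim\log_2(P_S)$ and the sum rate behaves as $M\log_2(P_S)$. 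I would then show the cap $R_k/\tau$ in $r_k^N$ is eventually inactive: since $\tilde L_S\to\infty$ the number of undelivered bits grows faster than $\log_2(P_S+P_R)$, so the channel term attains the minimum in the limit. This gives a conditional DoF of $M$ in this mode.

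In the BS-RS cooperative mode ($S=1$), the selected BS and the RS act as a single $3M$-antenna transmitter through the augmented channel $h_k=[H_{D_k,B_b}^T,H_{D_k,R}^T]^T\in\mathbb{C}^{3M}$, serving the $3M$ users in $\mathcal{U}_b^C$ by ZF. The same full-rank argument yields ZF vectors $v_{b_k}'$ with strictly positive gains $|h_k v_{b_k}'|^2$, and letting $\|v_{b_k}'\|^2$ scale with the joint budget $P_S+P_R$ makes each gain grow linearly in $P_S+P_R$, so the sum rate behaves as $3M\log_2(P_S+P_R)$ and the conditional DoF is $3M$. Here the hypothesis $2\tilde B_C\ge\sum_{l=1}^L q_l\tilde F_l$ is used to guarantee that all the designated parity bits actually fit in the RS cache, so that the cooperative transmission is feasible, while $\tilde L_S\to\infty$ again renders the $R_k'/\tau$ cap inactive.

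Combining the two modes, for fixed $\boldsymbol\pi$ the system operates in the cooperative mode a fraction $P\{S=1\}$ of the time and in the BS-only mode the complementary fraction, so by linearity the conditional DoF is $3M\,P\{S=1\}+M(1-P\{S=1\})=M(1+2P\{S=1\})$; taking $\mathbb{E}_{\boldsymbol\pi}$ then recovers \eqref{DoF}. I expect the main obstacle to be the rigorous high-SNR analysis of the ZF rates in each mode — namely showing that no DoF is lost (the effective channel gains stay bounded away from zero under the correct power scaling) and that the bit-availability caps $R_k/\tau$ and $R_k'/\tau$ become inactive as $\tilde L_S\to\infty$ — since the clean weighted-average form of \eqref{DoF} hinges entirely on each mode attaining exactly $M$ and $3M$ degrees of freedom.
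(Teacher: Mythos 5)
Your proposal is correct and takes essentially the same route as the paper's proof: the paper likewise decomposes the expected sum rate by transmission mode (and chosen BS $b=j$), credits $M$ ZF streams to the BS-only mode and $3M$ to the cooperative mode under the stated power constraints, and then evaluates the limit in \eqref{DoFdef} as the weighted average $\mathbb{E}_{\boldsymbol\pi}\{M(1-P\{S=1\})+3M\,P\{S=1\}\}=M(1+2\mathbb{E}_{\boldsymbol\pi}\{P\{S=1\}\})$. If anything, your handling of the high-SNR ZF gains (almost-sure full rank, power scaling) and of the inactivity of the $R_k/\tau$, $R_k'/\tau$ caps as $\tilde{L}_S\to\infty$ is more explicit than the paper's, which asserts these steps without detail.
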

\begin{proof}
Refer to Appendix \ref{proof1}.
\end{proof}
\subsection{BS-RS MIMO Cooperative Transmission Probability}
From theorem 1, we see that the overall performance of the cached relay network depends strongly on the probability of BS-RS MIMO cooperative transmission mode \mbox{$P\{S=1\}$}. It was previously shown in \cite{han2015degrees} that MDS-coded caching  significantly improves the probability of BS-RS MIMO cooperative transmission mode over naive random caching. The following theorem gives an expression for the BS-RS MIMO cooperative transmission probability.
\begin{theorem}
For given $\mathbf{q}$, URP $\boldsymbol\pi$ and as \mbox{$\tilde{L}_S\to\infty$}, the fraction of time at which BS-RS MIMO cooperative transmission is used can be given by:
\begin{equation}\label{coop_prob}
P\{S=1\}=\frac{m_{\boldsymbol\pi}^{(1)}+m_{\boldsymbol\pi}^{(2)}-2m_{\boldsymbol\pi}^{(1)}m_{\boldsymbol\pi}^{(2)}}{f_1(\boldsymbol\pi)+f_2(\boldsymbol\pi)+4m_{\boldsymbol\pi}^{(1)}m_{\boldsymbol\pi}^{(2)}}
\end{equation}
where $f_i(\boldsymbol\pi)=3-3I_i-(5-3I_{3-i})m_{\boldsymbol\pi}^{(i)}$, $I_i=1(|\mathbb{K}_i|<M)$, \mbox{$m_{\boldsymbol\pi}^{(i)}=\frac{3M}{|\mathbb{K}_i|}\min_k\{\tilde{q}^{(i)}_k\}$}, $i\in\{1,2\}$ and \mbox{$\{\tilde{q}^{(i)}_k\}$} can be obtained from \mbox{$\{q^{(i)}_k\}$} by Algorithm \ref{alg1}.
\end{theorem}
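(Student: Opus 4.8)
The plan is to pass to the $\tilde{L}_S\to\infty$ fluid limit, in which each segment is so long that per-slot boundary effects vanish and the DoF are achieved exactly: cooperative transmission delivers cached parity bits at rate $3M$ and BS-only transmission delivers non-cached parity bits at rate $M$ (in units of $\log_2(P_S+P_R)$). By the MDS property each segment of a group-$i$ user needs only $L_S$ parity bits, of which an effective fraction $m_{\boldsymbol\pi}^{(i)}$ can be supplied cooperatively and the remaining $1-m_{\boldsymbol\pi}^{(i)}$ must be supplied in BS-only mode. First I would record, as a preliminary step, that $m_{\boldsymbol\pi}^{(i)}=\frac{3M}{|\mathbb{K}_i|}\min_k\{\tilde{q}^{(i)}_k\}$ is exactly this effective cooperatively-deliverable fraction: because cooperation requires at least $3M$ users with $S^{(k)}=1$ \emph{simultaneously}, the cached bits of users whose cached budgets are exhausted early cannot all be served cooperatively, and Algorithm~\ref{alg1} performs the resulting clipping, the binding constraint being captured by the $\min_k$.

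Next I would reduce $P\{S=1\}$ to a ratio of deterministic time fractions. Over a long horizon let $\phi_{ab}$ denote the fraction of time in which $(S_1,S_2)=(a,b)$, so that $P\{S=1\}=\phi_{11}+\phi_{10}+\phi_{01}=1-\phi_{00}$. The scheduling rule fixes the behaviour in each regime: when $S=1$ the channel is in cooperative mode and is given to the unique eligible BS, or split evenly when both are eligible; when $S=0$ it is in BS-only mode and is split between the groups that have at least $M$ users. Writing $e_1=\phi_{11}+\phi_{10}$ and $e_2=\phi_{11}+\phi_{01}$ for the eligibility fractions, bit conservation for each group yields a balance equation: group $i$'s cooperative service time (full during its solo-eligible regime, half during the joint regime) at rate $3M$, versus its BS-only service time (only during $\phi_{00}$, shared) at rate $M$, must reproduce the demand ratio $m_{\boldsymbol\pi}^{(i)}:(1-m_{\boldsymbol\pi}^{(i)})$; e.g.\ for $i=1$ this reads $3(\tfrac12\phi_{11}+\phi_{10})(1-m_{\boldsymbol\pi}^{(1)})=\tfrac12 m_{\boldsymbol\pi}^{(1)}\phi_{00}$. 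Notably $|\mathbb{K}_i|$ cancels because only this ratio matters, explaining why the final formula depends on group sizes only through $m_{\boldsymbol\pi}^{(i)}$. The indicators $I_i$ enter through casework: a group with fewer than $M$ users is never selected in BS-only mode, which changes the service-rate coefficients and accounts for the $3I_i$ and $3I_{3-i}$ terms in $f_i$.

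The crux is the closure of this system, i.e.\ determining $\phi_{11}$. A naive model treating each group as a single synchronized cached/non-cached phase is inconsistent, because one-at-a-time transitions together with the idling of the non-served group make the joint-eligible regime unreachable and force $\phi_{11}=0$; this cannot be correct, since the target formula is bilinear in $m_{\boldsymbol\pi}^{(1)},m_{\boldsymbol\pi}^{(2)}$. The resolution, and the step I expect to be the main obstacle to justify rigorously, is to exploit the many-user structure: since the two groups are driven by independent user requests and independent channel realizations, their eligibility indicators $S_1,S_2$ decouple in the fluid limit, so $\phi_{11}=e_1e_2$, $\phi_{10}=e_1(1-e_2)$, $\phi_{01}=(1-e_1)e_2$ and $\phi_{00}=(1-e_1)(1-e_2)$. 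This gives $P\{S=1\}=e_1+e_2-e_1e_2$ and closes the two balance equations into a pair of relations in $(e_1,e_2)$.

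Finally I would solve the two balance equations for $e_1,e_2$ in terms of $m_{\boldsymbol\pi}^{(1)},m_{\boldsymbol\pi}^{(2)}$ and substitute into $P\{S=1\}=e_1+e_2-e_1e_2$. The even split in the joint regime produces the bilinear numerator term $-2m_{\boldsymbol\pi}^{(1)}m_{\boldsymbol\pi}^{(2)}$ and the denominator term $+4m_{\boldsymbol\pi}^{(1)}m_{\boldsymbol\pi}^{(2)}$, while the solo-regime and BS-only coefficients assemble into $f_i=3-3I_i-(5-3I_{3-i})m_{\boldsymbol\pi}^{(i)}$, giving the stated closed form. I would then fix the constants by sanity-checking the degenerate case $m_{\boldsymbol\pi}^{(2)}=0$, where the expression should reduce to $m_{\boldsymbol\pi}^{(1)}/(6-5m_{\boldsymbol\pi}^{(1)})$ when $I_2=0$ and to $m_{\boldsymbol\pi}^{(1)}/(3-2m_{\boldsymbol\pi}^{(1)})$ when $I_2=1$, and the symmetric case $m_{\boldsymbol\pi}^{(1)}=m_{\boldsymbol\pi}^{(2)}=m$, where it should reduce to $m/(3-2m)$.
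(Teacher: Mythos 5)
Your proposal is correct and reaches (\ref{coop_prob}) by a genuinely different route from the paper. The paper's proof first shows (its Lemma 1) that per-user cooperative and BS-only rates are asymptotically equal, $\lim_{P_S\to\infty} r^C_{i_k}/r^N_{i_k}=1$, and then, rather than analyzing the symmetric $1/2$ tie-breaking directly, conditions on a priority order: assuming BS1 takes the cooperative slot whenever $S_1=1$, it writes per-user bit-conservation equations in which the joint-occupancy fractions already appear in product form (e.g.\ $(1-P^{(1)}_1)(1-P^{(1)}_2)$ and $P^{(1)}_2(1-P^{(1)}_1)$), solves for $P^{(1)}_1,P^{(1)}_2$, sets $P^{(1)}\{S=1\}=P^{(1)}_1+P^{(1)}_2-P^{(1)}_1P^{(1)}_2$, repeats with BS2 given priority, and averages the two answers ``to preserve fairness.'' You instead keep the protocol's actual symmetric tie-breaking and close the joint dynamics with the explicit decoupling ansatz $\phi_{11}=e_1e_2$, etc.; note that the paper rests on the very same unproven decoupling hypothesis --- its product-form joint fractions encode silently exactly what you flag as the main obstacle --- so you assume nothing more, you are merely candid about it. Your system does yield (\ref{coop_prob}) exactly: writing $m_i$ for $m^{(i)}_{\boldsymbol\pi}$, $Q=(1-e_1)(1-e_2)$, $a=3e_1(2-e_2)$ and $b=3e_2(2-e_1)$, your balance equations (case $I_1=I_2=0$) are equivalent to $m_1=a/(Q+a)$ and $m_2=b/(Q+b)$; substituting these into the right-hand side of (\ref{coop_prob}) collapses it to $(a+b)/(6Q+a+b)$, and the identity $a+b=6(1-Q)$ makes this equal $1-Q=P\{S=1\}$, as required; the indicator cases are likewise consistent because $I_i=1$ forces $m_i=0$ via the zero-padding in Algorithm \ref{alg1}. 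As for what each route buys: the paper's Lemma 1 handles the high-SNR normalization more carefully than your outright assertion of sum rates $3M$ and $M$, and its priority conditioning avoids positing the full joint product decomposition, at the cost of the extra fairness-averaging step (which turns out to be benign --- each priority solution already equals (\ref{coop_prob}), so the two coincide); your route analyzes the protocol as actually specified in a single pass and isolates the one genuinely non-rigorous step, which neither derivation closes.
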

\begin{proof}
Refer to Appendix \ref{proof2}.
\end{proof}
\par We define \mbox{$\tilde{\mathbf{q}}_i=[\tilde{q}^{(i)}_1,...,\tilde{q}^{(i)}_{3M}]$} as the \textit{modified cache content placement vector} for the $i$th BS. We can get \mbox{$\tilde{\mathbf{q}_i}$} from \mbox{$\mathbf{q}_i\ \forall i\in \{1,2\}$} by the following algorithm
\begin{algorithm}\caption{Getting \mbox{$\tilde{\mathbf{q}_i}$} from \mbox{$\mathbf{q}_i\ \forall i\in \{1,2\}$}} \label{alg1}
\textbf{Step 0: If} \mbox{$|\mathbb{K}_i|<3M$}, put all the entries of \mbox{$\mathbf{q}_i$} in \mbox{$\tilde{\mathbf{q}_i}$}, and extend the size of $\tilde{\mathbf{q}_i}$ to be \mbox{$3M$} by adding zeros at the end of it, then terminate the algorithm.\\
\textbf{Else}, put the largest \mbox{$3M$} entries of \mbox{$\mathbf{q}_i$} in \mbox{$\tilde{\mathbf{q}_i}$} with any order. Then, replace the largest \mbox{$3M$} entries of \mbox{$\mathbf{q}_i$} by zeros.\\
\textbf{Step 1:} Get the largest entry of \mbox{$\mathbf{q}_i$}, and add it to the smallest entry in \mbox{$\tilde{\mathbf{q}_i}$}, i.e., $\tilde{q}^{(m)}_i=\tilde{q}^{(m)}_i+\max_k \{q^{(k)}_i\}$, where $m$ is the index of the smallest entry in $\tilde{\mathbf{q}_i}$. Then, replace the largest entry in \mbox{$\mathbf{q}_i$} by zero and repeat \textbf{step 1} until all the entries of \mbox{$\mathbf{q}_i$} become zeros.
\end{algorithm}
\par \textit{Example 3}: Let us consider the same setup of example 2. Let us assume that the vectors $\mathbf{q}_1$ and $\mathbf{q}_2$ are given by $\mathbf{q}_1=[0.3,0.1,0.4,0.05,0.6,0.3,0.4,0.1,0.2]^T$, and the vector $\mathbf{q}_2=[0.1,0.2,0.4]^T$. Then, $\tilde{\mathbf{q}}_1=[0.4,0.4,0.6,0.35,0.4,0.3]^T,\tilde{\mathbf{q}}_2=[0.1,0.2,0.4,0,0,0]^T$.
\par It is observed that if $ |\mathbb{K}_2|<M $ and $|\mathbb{K}_1|=3M$, then $I_2=1, \min_k\{\tilde{q}^{(2)}_k\}=0$, and hence, (\ref{coop_prob}) can be reduced to the cooperative transmission probability in the case of single BS which was given in \cite{han2015degrees}.
\par For clarity, we have considered a case with two BSs and a simple antenna configurations to highlight the main ideas. However, the results and algorithms can be easily extended for arbitrary number of BSs with more general antenna settings.
\subsection{DoF Optimization}
In this Subsection we formulate the DoF optimization problem. We aim to find the cache content placement vector \mbox{$\mathbf{q}$} based on the content popularity statistics that maximizes the DoF. From (\ref{DoF}), to maximize the DoF, we need to maximize \mbox{$\mathbb{E}_{\boldsymbol\pi}\left\lbrace P\{S=1\}\right\rbrace$}. We denote \mbox{$D=\{\mathbf{q}:q_l\in[0,1]\ \forall l\in\{1,...,L\},\ \sum_{l=1}^L\tilde{F}_lq_l\leq 2\tilde{B}_C \}$} as the feasible set of cache content placement vectors. The DoF optimization problem is formulated as
\begin{equation}
\mathcal{P}:\ \max_{\mathbf{q}\in D} p(\mathbf{q})=\mathbb{E}_{\boldsymbol\pi}\left\lbrace \frac{m_{\boldsymbol\pi}^{(1)}+m_{\boldsymbol\pi}^{(2)}-2m_{\boldsymbol\pi}^{(1)}m_{\boldsymbol\pi}^{(2)}}{f_1(\boldsymbol\pi)+f_2(\boldsymbol\pi)+4m_{\boldsymbol\pi}^{(1)}m_{\boldsymbol\pi}^{(2)}} \right\rbrace.
\end{equation}
\mbox{$\mathcal{P}$} is a stochastic optimization problem. Moreover, the objective function is neither concave nor convex for unknown URP distribution. We propose a global-optimal solution for $\mathcal{P}$ which requires the knowledge of the distribution of $\boldsymbol\pi$ and a low-complexity algorithm to find a near-optimal solution which does not require explicit knowledge of the URP distribution.
\subsection{Global-Optimal Solution}
For a given distribution of \mbox{$\boldsymbol\pi$}, the objective function of problem $\mathcal{P}$ is a convex function. Problem $\mathcal{P}$ involves the maximization of a convex function over a feasible set given by a polyhedron. As a result, the optimal solution is one of the vertices of the polyhedron. Let us assume that \mbox{$\boldsymbol\pi$} can take the value \mbox{$\boldsymbol\pi^{(l)}$} with probability $f_l$,  $l=1,...,L^K$. We denote \mbox{$\tilde{\mathbf{q}_i}(\boldsymbol\pi^{(l)}), i=1,2$} as the modified cache content placement vectors for the realization \mbox{$\boldsymbol\pi^{(l)}$}. To solve our optimization problem, we transform \mbox{$\mathcal{P}$} into an equivalent deterministic optimization problem \mbox{$\mathcal{P}_E$} by adding auxiliary variables \mbox{$t_1,...,t_{2L^K}$} and non-negative slack variables \mbox{$s_1,...,s_{1+L}$}. We assume that \mbox{$w=[q_1,...,q_L,t_1,...,t_{2L^K},s_1,...,s_{1+L}]^T$}. Thus we have
\begin{equation} \label{P_E}
\mathcal{P}_E:\ \max_{w\in D_E}p_E(w)=\sum_{i=1}^{L^K}f_i\frac{m_{\boldsymbol\pi^{(i)}}^{(1)}+m_{\boldsymbol\pi^{(i)}}^{(2)}-2m_{\boldsymbol\pi^{(i)}}^{(1)}m_{\boldsymbol\pi^{(i)}}^{(2)}}{f_1(\boldsymbol\pi^{(i)})+f_2(\boldsymbol\pi^{(i)})+4m_{\boldsymbol\pi^{(i)}}^{(1)}m_{\boldsymbol\pi^{(i)}}^{(2)}}
\end{equation}
where
\begin{equation} \label{D_E}
\begin{aligned} D_E=&\{\mathbf{q},\{t_i\},\{s_i\}:\sum_{l=1}^L\tilde{F}_lq_l+s_1=2\tilde{B}_C,q_l+s_{1+l}=1,\\
&q_l\geq 0,s_1\geq 0,s_{1+l}\geq 0,t_i=\frac{3M}{|\mathbb{K}_1|}\min_k\{\tilde{q}^{(1)}_k(\boldsymbol\pi^{(i)})\},\\
&t_{L^K+i}=\frac{3M}{|\mathbb{K}_2|}\min_k\{\tilde{q}^{(2)}_k(\boldsymbol\pi^{(i)})\}\ \forall l\in \{1,...L\},\\
& i\in \{1,...L^K\}, k\in \{1,...K\} \}\end{aligned}
\end{equation}
is the feasible set of \mbox{$w$}. \mbox{$D_E$} is a polyhedron and \mbox{$\mathcal{P}_E$} is a convex maximization problem with linear constrains. The global-optimal solution of \mbox{$\mathcal{P}_E$} has the following property
\begin{theorem}
Given $K$ and the distribution of $\boldsymbol\pi$, the global-optimal solution of \mbox{$\mathcal{P}_E$} is a vertex of the polyhedron \mbox{$D_E$}.
\end{theorem}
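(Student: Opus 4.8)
The plan is to invoke the classical maximum principle for convex functions: a convex function on a nonempty compact convex set attains its maximum at an extreme point of that set. The text has already supplied the two structural facts this principle needs, namely that for a fixed distribution of $\boldsymbol\pi$ the objective $p_E$ is convex in $w$, and that the feasible set $D_E$ defined in (\ref{D_E}) is a polyhedron cut out by affine (in)equalities. The proof therefore reduces to checking compactness and then running the standard convex-combination argument, so that the genuine work lies not in the optimization principle itself but in reconciling the $\min$-type defining constraints with the polyhedral description.

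First I would verify that $D_E$ is bounded, hence a polytope. Each coordinate of $w$ lives in a bounded range: the pairs $q_l+s_{1+l}=1$ with $q_l,s_{1+l}\geq 0$ force $q_l\in[0,1]$ and $s_{1+l}\in[0,1]$; the constraint $\sum_l \tilde{F}_l q_l + s_1 = 2\tilde{B}_C$ with $s_1\geq 0$ confines $s_1$ to $[0,2\tilde{B}_C]$; and each auxiliary variable $t_i$ equals $\frac{3M}{|\mathbb{K}_i|}\min_k\{\tilde{q}^{(i)}_k\}$, a quantity built from the already-bounded entries of $\mathbf{q}$, so the $t_i$ are bounded as well. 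Since $\mathbf{q}=\mathbf{0}$ (with the forced slacks and $t_i=0$) is feasible, $D_E$ is nonempty, closed, and bounded, hence a nonempty compact convex polytope, which by the Minkowski--Weyl representation theorem is the convex hull of its finite vertex set $\{v_1,\dots,v_m\}$.

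Next I would carry out the convexity step. For an arbitrary feasible point $w\in D_E$, write $w=\sum_{j=1}^{m}\lambda_j v_j$ with $\lambda_j\geq 0$ and $\sum_j\lambda_j=1$. Convexity of $p_E$, together with the elementary fact that a convex combination never exceeds its largest term, gives $p_E(w)\leq\sum_j\lambda_j\,p_E(v_j)\leq\max_{1\leq j\leq m}p_E(v_j)$. Taking the maximum over $w\in D_E$ yields $\max_{w\in D_E}p_E(w)\leq\max_j p_E(v_j)$, and since each vertex is itself feasible the reverse inequality is immediate; hence the maximum is attained at some vertex $v_{j^\star}$, which is precisely the claim.

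The step I expect to demand the most care is justifying the polyhedral (and therefore convex) status of $D_E$, since each equality $t_i=\frac{3M}{|\mathbb{K}_i|}\min_k\{\tilde{q}^{(i)}_k(\boldsymbol\pi^{(i)})\}$ involves a minimum of affine functions and so is not literally affine. I would resolve this by relaxing each such equality to the family of affine inequalities $t_i\leq\frac{3M}{|\mathbb{K}_i|}\tilde{q}^{(i)}_k$ over all $k$, which forces $t_i\leq\frac{3M}{|\mathbb{K}_i|}\min_k\{\tilde{q}^{(i)}_k\}$; because $p_E$ is nondecreasing in each $t_i$ (a larger $m_{\boldsymbol\pi}^{(i)}$ only increases the cooperation probability, consistent with the single-BS reduction discussed after the cooperation-probability theorem), every maximizer drives these constraints tight, so the relaxation is exact and the maximizer of the relaxed problem solves $\mathcal{P}_E$. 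A related subtlety is that $\tilde{q}^{(i)}_k$ is only piecewise affine in $\mathbf{q}$ because Algorithm~\ref{alg1} sorts and redistributes entries; within each cell on which the sorting order is fixed the map is affine, so the relaxed feasible region is a genuine polytope cell-by-cell, and verifying the monotonicity of $p_E$ in the $t_i$ (equivalently, the tightness of the relaxation) is the part of the argument I would expect to require explicit checking before the vertex-optimality conclusion applies verbatim.
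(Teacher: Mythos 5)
There is a genuine gap: your argument assumes exactly what the paper's proof of this theorem exists to establish, namely the convexity of $p_E$. The sentence in the body asserting that the objective ``is a convex function'' is a forward reference whose justification \emph{is} this appendix proof; it is not an input you may take for granted in a blind proof. The paper's proof spends essentially all of its effort on this point: it bounds $p_E(w)\leq\sum_{i=1}^{L^K}f_i=1$, disposes of the cases $I_1=1$ or $I_2=1$ by reduction to the single-BS expression of \cite{han2015degrees}, and in the main case $I_1=I_2=0$ computes the Hessian of $p_E$ with respect to the auxiliary variables (the pairs $(m_{i1},m_{i2})=(t_i,t_{L^K+i})$), which is block diagonal with the $2\times 2$ blocks $D_i$ of (\ref{hess}); it then proves each block positive semidefinite via its determinant and the inequality $6+4m_{i1}m_{i2}-5m_{i1}-5m_{i2}\geq 1-m_{i2}\geq 0$ on $[0,1]^2$. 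This is the nontrivial content of the theorem: $p_E$ is a sum of ratios of bilinear functions of the $t$-variables, and convexity of such ratios cannot be presumed. The portion you elaborate carefully (nonemptiness, boundedness, Minkowski--Weyl, the convex-combination inequality) is precisely the portion the paper treats as standard, disposing of it with a citation to \cite{benson1995concave}. As a proof of Theorem 3, then, your proposal establishes the routine half and postulates the hard half.

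To your credit, your scrutiny of the constraint set is more careful than the paper's own. The equalities $t_i=\frac{3M}{|\mathbb{K}_1|}\min_k\{\tilde q^{(1)}_k(\boldsymbol\pi^{(i)})\}$ are indeed not affine (the graph of a pointwise minimum is not an affine set), so $D_E$ as written is not literally a polyhedron --- a point the paper silently ignores. Your relaxation-plus-monotonicity repair is viable, and the needed monotonicity can be verified explicitly: for the generic term with $I_1=I_2=0$, writing $\Delta=6-5m_1-5m_2+4m_1m_2$, one finds $\frac{\partial}{\partial m_1}\left(\frac{m_1+m_2-2m_1m_2}{\Delta}\right)=\frac{6(1-m_2)^2}{\Delta^2}\geq 0$, so the objective is nondecreasing in each $t_i$ and maximizers saturate the relaxed inequalities. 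However, your own caveat remains unresolved: since Algorithm \ref{alg1} sorts and redistributes entries, $\tilde q^{(i)}_k$ is only piecewise affine in $\mathbf{q}$, so the relaxed region is a finite union of polytopes rather than a single one, and a vertex of one cell need not be a vertex of $D_E$. That final step, together with the missing convexity verification, would have to be supplied before your vertex-optimality conclusion holds.
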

\begin{proof}
Refer to Appendix \ref{proof3}.
\end{proof}
From theorem 3, we can solve \mbox{$\mathcal{P}_E$} by enumerating the vertices of \mbox{$D_E$}. Majthay and Whinston suggested a finite algorithm for solving this problem \cite{majthay1974quasi}. Unfortunately, the complexity of the Majthay-Whinston algorithm grows exponentially with the number of files $L$. Moreover, since the size of the support of \mbox{$\boldsymbol\pi$} grows exponentially with the number of users \mbox{$K$}, the complexity of the Majthay-Whinston algorithm also grows exponentially with \mbox{$K$}. Also, it is not easy in practice to obtain the distribution of \mbox{$\boldsymbol\pi$}. To overcome these challenges, we propose a low-complexity algorithm that does not require explicit knowledge of the distribution of \mbox{$\boldsymbol\pi$}.
\subsection{Low-Complexity Near-Optimal Solution}
Sampling-based algorithms have been previously used to solve stochastic optimization problems \cite{ruszczynski2003stochastic}. We follow on this direction to solve the problem using \mbox{$N$} independent samples of \mbox{$\boldsymbol\pi$} which denoted by \mbox{$\hat{\boldsymbol\pi}^{(1)},...,\hat{\boldsymbol\pi}^{(N)}$}. Then, the sample average approximation problem of \mbox{$\mathcal{P}$} is
\begin{equation} \label{optpro}
\max_{\mathbf{q}\in D}\ p_N(\mathbf{q})=\frac{1}{N}\sum_{i=1}^N \frac{m_{\hat{\boldsymbol\pi}^{(i)}}^{(1)}+m_{\hat{\boldsymbol\pi}^{(i)}}^{(2)}-2m_{\hat{\boldsymbol\pi}^{(i)}}^{(1)}m_{\hat{\boldsymbol\pi}^{(i)}}^{(2)}}{f_1(\hat{\boldsymbol\pi}^{(i)})+f_2(\hat{\boldsymbol\pi}^{(i)})+4m_{\hat{\boldsymbol\pi}^{(i)}}^{(1)}m_{\hat{\boldsymbol\pi}^{(i)}}^{(2)}}.
\end{equation}
\mbox{$D$} is a polyhedron and (\ref{optpro}) is a convex maximization problem with linear constrains. We propose a low-complexity algorithm (Algorithm \ref{alg2}) to solve this problem. The idea of Algorithm \ref{alg2} is inspired from the simplex method which can be founded in \cite{murty1983linear}.
\begin{algorithm}\caption{Low-complexity cache content placement algorithm without knowledge of URP distribution} \label{alg2}
\textbf{Step 0:} Obtain \mbox{$N$} realizations of URP \mbox{$\{\hat{\boldsymbol\pi}^{(1)},...\hat{\boldsymbol\pi}^{(N)}\}$}. Let \mbox{$\mathbf{q}^0=[0,...,0]^T$} be the initial vertex of the polyhedron \mbox{$D$}. Set the step index \mbox{$k=0$}.\\
\textbf{Step 1:} Starting at \mbox{$\mathbf{q}^k$}, if there is no adjacent vertex which has a greater value for \mbox{$p_N(\mathbf{q})$}, terminate the algorithm. Otherwise, assign the adjacent vertex where \mbox{$p_N(\mathbf{q})$} is maximum to \mbox{$\mathbf{q}^{k+1}$}. Put \mbox{$k=k+1$} and repeat \textbf{Step 1}.
\end{algorithm}
In step 1 of Algorithm \ref{alg2}, we utilize the subroutine in \cite{murty1983linear} to find the adjacent vertices of a vertex \mbox{$\mathbf{q}^k$}. The simplex method is efficient, as well as Algorithm~\ref{alg2}.
\section{Numerical Results} \label{num}
In this Section, we numerically evaluate the performance of our proposed scheme and compare it with various baselines. We begin with a simple simulation to compare the results of Algorithm \ref{alg2} with the global-optimal solution results. A cached relay network with two BSs is considered with $L=60$ content files ($30$ files for each BS) and $K=12$ users. We assume that the sizes of all the content files are equal. Each BS has $M=2$ antennas and the RS has $4$ antennas. We assume that the power constrains of the BSs and the RS are equal, i.e., $P_S=P_R=P$. Each user requests a content file independent on the other users. The content popularity of each content gateway is modeled by the Zipf distribution \cite{breslau1999web} with popularity skewness parameter $\gamma=2$. We assume that a user requests a file from the content gateway of BS1 or BS2 with probability $\frac{1}{2}$.
\begin{figure}[!ht]
  \centering
    \includegraphics[width=0.48\textwidth]{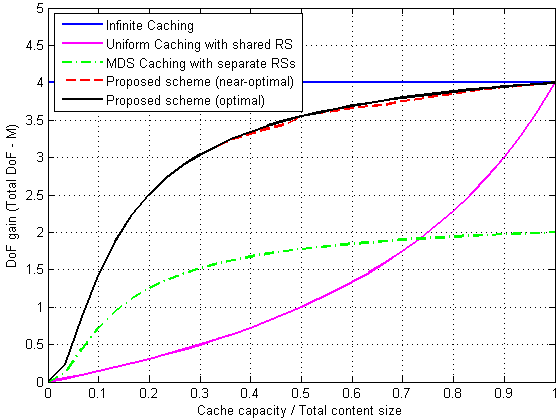}
	  \caption{DoF gain versus cache capacity for $L=60$, $K=12$ and $M=2$}
 \label{fig:3}
\end{figure}
\begin{figure}[!ht]
  \centering
    \includegraphics[width=0.48\textwidth]{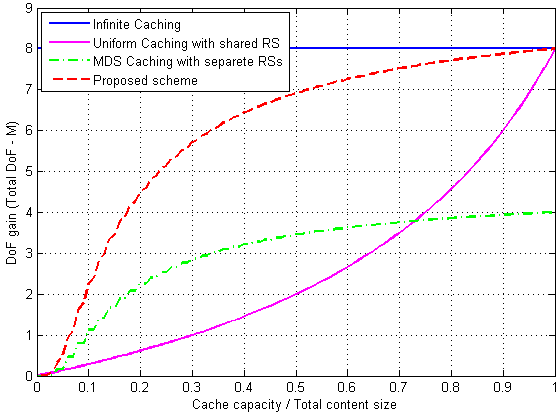}
	  \caption{DoF gain versus cache capacity for $L=100$, $K=24$ and $M=4$}
 \label{fig:2}
\end{figure}

\par In Fig.~\ref{fig:3}, we plot the DoF gain versus the cache capacity. The results of our proposed near-optimal solution using $N=400$ URP samples are compared with the proposed global-optimal solution results and with other various schemes. The first scheme is the \textit{infinite caching} scheme where the RS can store all the content files. The \textit{uniform caching} scheme where $q_l=\frac{2\tilde{B}_C}{L\tilde{F}}\ \forall l\in\{1,...,L\}$ is the second one. The third is the \textit{MDS-coded caching with separate RSs} scheme where each BS has a separate RS as in \cite{han2015degrees}. From Fig.~\ref{fig:3}, we can see that the results of the low-complexity algorithm are very close to the global solution results for a moderate value of $N$.
\par Now, we change our setup to be $L=100$, $K=24$ and $M=4$. As $L$ and $K$ increase, the complexity of the Majthay-Whinston algorithm increases rapidly and the global-optimal solution takes a very long time. Thus, we use Algorithm \ref{alg2} to calculate the optimizing vector $\mathbf{q}$ using $N=10000$ samples.

\par From Fig.~\ref{fig:2}, we can observe the significant improvement in the DoF gain due to having a shared RS instead of two separate RSs. We can also observe the improvement in the DoF gain due to using MDS-coded caching. Also, we can see that the gap between the results of \cite{han2015degrees} and the proposed scheme increases as the cache capacity increases. These results demonstrate that our proposed solution has significant performance gain over the case where there are two separate RSs and also over the other caching schemes. This improvement occurs even in the case where the cache capacity is small compared to the total content size.
\section{Conclusion} \label{con}
In this paper, we have proposed a cache-assisted relaying protocol for cached relay networks with multiple BSs. No cooperation between the BSs is needed. We have exploited the benefit of having a shared RS to provide large DoF gain. An optimal and a low-complexity near-optimal algorithms for the cache content placement problem have been proposed with the objective of maximizing the DoF gain. Numerical results have been provided to show the performance improvement of the proposed scheme over various baseline schemes.
\appendix
\subsection{DoF Formulation (Proof of Theorem 1)} \label{proof1}
The average instantaneous sum rate received by the $K$ users can be given by
\begin{equation}
\begin{aligned}
\mathbb{E}_{H}\Big\{ \max_{v_{b_k},v_{b_k}^{'}} R_\Sigma (P_{S}&, P_R, B_C,\{F_l\}) \Big\} \\
=\sum_{j=1}^2 \max_{v_{j_k},v_{j_k}^{'}}\Biggl( &P\{S=0,b=j\}\sum_{k=1}^K\mathbb{E}_{H}\{r_k^N(v_{j_k})\} \\
& +P\{S=1,b=j\}\sum_{k=1}^K\mathbb{E}_{H}\{r_k^C(v_{j_k}^{'})\} \Biggr)
\end{aligned}
\end{equation}
where $b\in \{1,2\}$ is the index of the chosen BS to use the channel. The beamforming vectors $v_{b_k}$ and $v_{b_k}^{'}$ satisfy
\begin{equation}
\begin{aligned}
P\{&S=0,b=j\}\sum_{k\in\mathcal{U}_b^N}\mathbb{E}_{H}\norm{v_{j_k}}_2^2 \\
& +P\{S=1,b=j\}\sum_{k\in\mathcal{U}_b^C}\mathbb{E}_{H}\norm{v_{j_k}^{'}\begin{bmatrix}\mathbf{I}_M &\mathbf{0}\\ \mathbf{0} &\mathbf{0}\end{bmatrix}}_2^2 \leq P_S,\\
P\{&S=1,b=j\}\sum_{k\in\mathcal{U}_b^C}\mathbb{E}_{H}\norm{v_{j_k}^{'}\begin{bmatrix}\mathbf{0} &\mathbf{0}\\ \mathbf{0} &\mathbf{I}_{2M}\end{bmatrix}}_2^2\leq P_R,\\
&\hspace{5cm} \forall\ j\in\{1,2\}
\end{aligned}
\end{equation}
where $\mathbf{I}_M$ is the $M\mathsf{x}M$ identity matrix. From the definition of the DoF of cached relay networks in (\ref{DoFdef}), we have
\begin{equation}
\begin{aligned}
\text{DoF}(\mathbf{q})&=\lim_{P_{S}+P_R\to \infty}\mathbb{E}\left\lbrace \frac{R_\Sigma (P_{S}, P_R, B_C,\{F_l\})}{\log_2 \left( P_{S}+P_R\right) } \right\rbrace\\
&=\mathbb{E}_{\boldsymbol\pi} \Big\{ M.(1-P\{S=1\})+3M.P\{S=1\}\Big\}\\
&=M\left( 1+2\mathbb{E}_{\boldsymbol\pi}\left\lbrace P\{S=1\}\right\rbrace \right).
\end{aligned}
\end{equation}
\subsection{MIMO Cooperation Probability (Proof of Theorem 2)} \label{proof2}
Let us define \mbox{$r^C_{i_k}=\mathbb{E}_{\boldsymbol\pi}\{r_{k}^C(v^{'}_{i_k})|S_i=1, k \in \mathcal{U}_i^{C}\}$} and
\mbox{$r^N_{i_k}=\mathbb{E}_{\boldsymbol\pi}\{r_{k}^N(v_{i_k})|S_i=0, k \in \mathcal{U}_i^N\}$}, $i=1,2$. \\
We begin with proving the following lemma
\begin{lemma}
\begin{equation}
\lim_{P_{S} \to \infty}\frac{r^C_{1_k}}{r^N_{1_k}} = \lim_{P_{S} \to \infty}\frac{r^C_{2_k}}{r^N_{2_k}} = 1.
\end{equation}
\end{lemma}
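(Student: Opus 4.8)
The plan is to show that, under the standing assumption $\tilde L_S\to\infty$, each of the two conditional per-user rates behaves in the high-SNR regime like $\log_2$ of the available transmit power plus a bounded term, with the \emph{same} pre-log coefficient equal to one; the ratio of two such diverging quantities then necessarily tends to one. First I would dispose of the $\min$ in the definitions of $r_k^N$ and $r_k^C$. Since we condition on $k\in\mathcal{U}_i^N$ (resp. $k\in\mathcal{U}_i^C$), the indicator equals one, and the only competing term is the deadline term $R_k/\tau$ (resp. $R_k'/\tau$) counting the as-yet-undelivered bits of the current segment. Because the segment length scales with $\tilde L_S$, this deadline term carries a coefficient proportional to $\tilde L_S$, so as $\tilde L_S\to\infty$ it dominates the finite mutual-information term and the minimum is attained by the log term. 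Hence in the relevant limit
\[
r^N_{i_k}=\mathbb{E}\{\log_2(1+|H_{D_k,B_i}v_{i_k}|^2)\},\quad r^C_{i_k}=\mathbb{E}\{\log_2(1+|h_k v_{i_k}'|^2)\},
\]
where the expectation is over the channel realizations.

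Next I would pin down the SNR scaling of each log term. In the BS-only mode the chosen BS serves $M$ users with $M$ antennas by ZF, so $v_{i_k}$ lies in the null space of the other $M-1$ users' channels and, after power loading with per-stream power on the order of $P_S/M$, the effective received power is $(P_S/M)\,G_k^N$, where $G_k^N>0$ almost surely is the squared projected channel gain, a quantity independent of $P_S$ whose logarithm is integrable for Gaussian channels. Therefore $\mathbb{E}\{\log_2(1+(P_S/M)G_k^N)\}=\log_2 P_S+c_1+o(1)$ as $P_S\to\infty$, with $c_1$ finite. The identical argument applied to the cooperative mode, where the augmented $3M\times1$ channel $h_k$ is ZF-processed against the other $3M-1$ users using a total power of order $P_S+P_R$ split across the $3M$ streams, gives $r^C_{i_k}=\log_2(P_S+P_R)+c_2+o(1)$; the separate per-node constraints $\le P_S$ and $\le P_R$ on the BS and RS parts of $v_{i_k}'$ only affect the additive constant $c_2$, not the pre-log coefficient.

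Finally I would combine these estimates. Because $P_R\le P_S$ we have $\log_2 P_S\le\log_2(P_S+P_R)\le\log_2 P_S+1$, so both $r^C_{i_k}$ and $r^N_{i_k}$ diverge like $\log_2 P_S$ while differing only by a bounded amount; their ratio therefore tends to $1$, and the same computation with $i=2$ establishes the second equality. The main obstacle is the second step: one must justify rigorously that ZF beamforming delivers each user a single interference-free stream whose effective SNR grows exactly linearly in the transmit power, i.e.\ that the projected gains $G_k^N,G_k^C$ are strictly positive almost surely with integrable logarithm, so that the pre-log coefficient is precisely one and the constants $c_1,c_2$ are finite. Once this is in place, the joint divergence of both rates makes the limit of the ratio insensitive to the constants, and the claim follows.
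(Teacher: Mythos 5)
Your proof is correct and follows essentially the same route as the paper's: both arguments come down to the observation that the single-stream ZF rates in the two modes scale as $\log_2$ of the available power with pre-log coefficient one and finite channel-dependent offsets, so their ratio tends to one (the paper packages this as a squeeze between the upper bound $1$, obtained from $P_R\le P_S$, and the lower bound $\log_2(\gamma_R P_R)/\log_2(\gamma_S P_S)$, which tends to one). The one step you should make explicit is your claim that the separate per-node power constraints affect only the additive constant $c_2$: this is valid only if $P_S/P_R$ stays bounded as $P_S\to\infty$ --- an assumption the paper states inside its proof --- because the cooperative ZF direction necessarily places energy on the RS antennas, so if $P_R$ stayed fixed the cooperative beamformer would be power-limited by $P_R$ and its pre-log would drop below one, breaking the lemma.
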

\begin{proof}
From the power constraint \mbox{$P_{S} \geq P_R$}, it can be easily shown that
\begin{equation} \label{eq1}
 \frac{r^C_{1_k}}{r^N_{1_k}} \leq 1.
\end{equation}
On the other hand, we have 
\begin{equation}
\frac{r^C_{1_k}}{r^N_{1_k}} \geq \frac{\log_2(\gamma_{R}P_R)}{\log_2(\gamma_{S}P_{S})}
\end{equation}
where \mbox{$\gamma_R,\gamma_S$} are finite valued factors which depend on the channel coefficients. Assume that \mbox{$\frac{\gamma_RP_R}{\gamma_SP_{S}}={\gamma}$} and that the ratio \mbox{$\frac{P_{S}}{P_R}$} is finite. Hence,
\begin{equation}
\frac{r^C_{1_k}}{r^N_{1_k}} \geq \frac{\log_2(\gamma)}{\log_2(\gamma_SP_{S})}+1
\end{equation}
and thus we have
\begin{equation} \label{eq2}
\lim_{P_{S} \to \infty}\frac{r^C_{1_k}}{r^N_{1_k}} \geq 1.
\end{equation}
From \eqref{eq1} and \eqref{eq2}, we have
\begin{equation} \label{eq5}
\lim_{P_{S} \to \infty}\frac{r^C_{1_k}}{r^N_{1_k}} = \lim_{P_{S} \to \infty}\frac{r^C_{2_k}}{r^N_{2_k}} = 1.
\end{equation}
\end{proof}
We first assume that BS1 has a priority in the cooperative mode, i.e., BS1 chooses the cooperative mode whenever it is possible regardless of the cache state of BS2. Let \mbox{$N_s$} denote the number of time slots required to transmit the current segment of user $k$ which served by BS1. As \mbox{$\tilde{L}_S\rightarrow \infty$}, we have \mbox{$N_s\rightarrow \infty$} and the average rate of user $k$ can be given by
\begin{equation} \label{eq3'}
r_k=\frac{L_s}{N_s\tau}.
\end{equation}
\par Let \mbox{$P^{(1)}_i=\frac{1}{N_s}\sum_{t=1}^{N_s} 1(S_i(t)=1)$}, $i=1,2$. In this case \mbox{$r_k$} can be expressed as
\begin{equation} \label{eq3}
r_k=P^{(1)}_1\frac{3M}{|\mathbb{K}_1|}r^C_{1_k}+(1-P^{(1)}_1)(1-P^{(1)}_2).\frac{1}{2-I_2}.\frac{M}{|\mathbb{K}_1|}r^N_{1_k}.
\end{equation}
The first term in (\ref{eq3}) is the rate of user $k$ in the case of $S_1=1$, while the second term is the rate in the case of non-cooperative transmission where both $S_1=0$ and $S_2=0$. In the non-cooperative mode, when $|\mathbb{K}_2|<M$ only BS1 can use the channel, while when $|\mathbb{K}_2|\geq M$ BS1 uses the channel with probability $\frac{1}{2}$, and hence, the factor $1/(2-I_2)$ appears.
\par Using MDS-coded caching aligns the transmission of the cached parity bits as much as possible and the total number of transmitted cached parity bits in \mbox{$N_s$} time slots is equal to \mbox{$\min_k\{\tilde{q}^{(1)}_k\}L_s$}. Dividing this number of bits by the average cooperative time $P^{(1)}_1N_s\tau$ yields
\begin{equation}\label{eq4}
r^C_{1_k}=\frac{\min_k\{\tilde{q}^{(1)}_k\}L_s}{P^{(1)}_1N_s\tau}.
\end{equation}
From \eqref{eq3'}, \eqref{eq3} and \eqref{eq4}, we have
\begin{equation}
3P^{(1)}_1+\frac{1}{2-I_2}(1-P^{(1)}_1)(1-P^{(1)}_2)\frac{r^N_{1_k}}{r^C_{1_k}}=\frac{3P^{(1)}_1}{\frac{3M}{|\mathbb{K}_1|}\min_k\{\tilde{q}^{(1)}_k\}}.
\end{equation}
We can take the limit as $P_S\to\infty$ and use lemma 1 to get
\begin{equation} \label{eq6}
3P^{(1)}_1+\frac{1}{2-I_2}(1-P^{(1)}_1)(1-P^{(1)}_2)=\frac{3P^{(1)}_1}{\frac{3M}{|\mathbb{K}_1|}\min_k\{\tilde{q}^{(1)}_k\}}.
\end{equation}
In a similar way if we assume that user k is served by BS2, we will get
\begin{equation} \label{eq7}
3P^{(1)}_2(1-P^{(1)}_1)+\frac{1}{2-I_1}(1-P^{(1)}_1)(1-P^{(1)}_2)=\frac{3P^{(1)}_2(1-P^{(1)}_1)}{\frac{3M}{|\mathbb{K}_2|}\min_k\{\tilde{q}^{(2)}_k\}}.
\end{equation}
From \eqref{eq6} and \eqref{eq7}, we have
\begin{equation}
P^{(1)}_2=\frac{\frac{3M}{|\mathbb{K}_2|}\min_k\{\tilde{q}^{(2)}_k\}}{6-3I_1-(5-3I_1)\frac{3M}{|\mathbb{K}_2|}\min_k\{\tilde{q}^{(2)}_k\}}
\end{equation}
\begin{equation}
P^{(1)}_1=\frac{\frac{3M}{|\mathbb{K}_1|}\min_k\{\tilde{q}^{(1)}_k\}(1-P^{(1)}_2)}{6-3I_2-(5-3I_2+P_2^{(1)})\frac{3M}{|\mathbb{K}_1|}\min_k\{\tilde{q}^{(1)}_k\}}.
\end{equation}
Hence, the probability of the BS-RS MIMO cooperative transmission given that BS1 has a priority in the cooperative mode is given by
\begin{equation}
P^{(1)}\{S=1\}=P^{(1)}_1+P^{(1)}_2-P^{(1)}_1P^{(1)}_2.
\end{equation}
If we assume that BS2 has a priority in the cooperative mode, we will get in a similar way
\begin{equation}
P^{(2)}_1=\frac{\frac{3M}{|\mathbb{K}_1|}\min_k\{\tilde{q}^{(1)}_k\}}{6-3I_2-(5-3I_2)\frac{3M}{|\mathbb{K}_1|}\min_k\{\tilde{q}^{(1)}_k\}}
\end{equation}
\begin{equation}
P^{(2)}_2=\frac{\frac{3M}{|\mathbb{K}_2|}\min_k\{\tilde{q}^{(2)}_k\}(1-P^{(2)}_1)}{6-3I_1-(5-3I_1+P^{(2)}_1)\frac{3M}{|\mathbb{K}_2|}\min_k\{\tilde{q}^{(2)}_k\}}
\end{equation}
and hence, the probability of the BS-RS MIMO cooperative transmission given that BS2 has a priority in the cooperative mode is given by
\begin{equation}
P^{(2)}\{S=1\}=P^{(2)}_1+P^{(2)}_2-P^{(2)}_1P^{(2)}_2
\end{equation}
where \mbox{$P^{(2)}_i=\frac{1}{N_s}\sum_{t=1}^{N_s} 1(S_i(t)=1)$}, $i=1,2$ given that BS2 has a priority in the cooperative mode. To preserve fairness between the two BSs, we take the average performance of the two assumptions. Thus we have
\begin{equation}
P\{S=1\}=\frac{m_{\boldsymbol\pi}^{(1)}+m_{\boldsymbol\pi}^{(2)}-2m_{\boldsymbol\pi}^{(1)}m_{\boldsymbol\pi}^{(2)}}{f_1(\boldsymbol\pi)+f_2(\boldsymbol\pi)+4m_{\boldsymbol\pi}^{(1)}m_{\boldsymbol\pi}^{(2)}}.
\end{equation}
\subsection{Proof of Theorem 3} \label{proof3}
From (\ref{P_E}), it can be shown that $p_E(w)\leq \sum_{i=1}^{L^K}f_i=1$. Thus, $p_E(w)$ is finite over $\mathcal{D}_E$. In the case of $I_1=1$ or $I_2=1$, the expression of $p_E(w)$ in (\ref{P_E}) can be reduced to \cite{han2015degrees} which was shown to be convex over $\mathcal{D}_E$. In the case of $I_1=0$ and $I_2=0$, the Hessian of $p_E(w)$ is a block-diagonal matrix whose diagonal block~$i,i\in \{1,...L^K\}$ is given by (\ref{hess}). For simplicity, we use $m_{i1},m_{i2}$ instead of $m_{\boldsymbol\pi^{(i)}}^{(1)},m_{\boldsymbol\pi^{(i)}}^{(2)}$.
\begin{equation} \label{hess}
D_{i}=\begin{bmatrix} \frac{f_i12(5-4m_{i2})(m_{i2}-1)^2}{(6+4m_{i1}m_{i2}-5m_{i1}-5m_{i2})^3} &\frac{f_i12(1-m_{i1})(1-m_{i2})}{(6+4m_{i1}m_{i2}-5m_{i1}-5m_{i2})^3}\\
\frac{f_i12(1-m_{i1})(1-m_{i2})}{(6+4m_{i1}m_{i2}-5m_{i1}-5m_{i2})^3} &\frac{f_i12(5-4m_{i1})(m_{i1}-1)^2}{(6+4m_{i1}m_{i2}-5m_{i1}-5m_{i2})^3}\end{bmatrix}.
\end{equation}
Also, the determinant of $D_i$ is given as
\begin{equation}
\text{det}(D_i)=\frac{576(1-m_{i1})^2(1-m_{i2})^2}{(6+4m_{i1}m_{i2}-5m_{i1}-5m_{i2})^5}f_i.
\end{equation}
From (\ref{D_E}), we have $m_1 \leq 1,\ m_2\leq 1$, and hence,
\begin{equation}
\begin{aligned}
6+4m_1m_2-5m_1-5m_2 &= 6-(5-4m_2)m1-5m_2\\
&\geq 6-(5-4m_2)-5m_2\\
&=1-m_2 \geq 0.
\end{aligned}
\end{equation}
Thus, $D_i$ is positive-semidefinite. Hence, the hessian of $p_E(w)$ is positive-semidefinite, so $p_E(w)$ is a convex function over $\mathcal{D}_E$. Using \cite{benson1995concave} and property \cite{shokrollahi2006raptor}, $\mathcal{P}_E$ has at least one global-optimal solution which must be a vertex of $\mathcal{D}_E$.

\end{document}